\documentclass[format=acmsmall, review=false]{acmart}

\usepackage{acm-ec-20}

\usepackage{amsthm}
\usepackage{amsmath}
\usepackage{algorithm}
\usepackage[noend]{algpseudocode}

\newtheorem{theorem}{Theorem}
\newtheorem{lemma}[theorem]{Lemma}
\newtheorem{proposition}[theorem]{Proposition}

\usepackage{footmisc}

\usepackage[most]{tcolorbox}

\newcommand*{\draft}{}

\ifdefined\draft
    \newcommand{\patrick}[1]{\textcolor{blue}{[Patrick: #1]}}
    \newcommand{\andy}[1]{\textcolor{red}{[Andy: #1]}}
    \newcommand{\brendan}[1]{\textcolor{orange}{[Brendan: #1]}}
\else
    \newcommand{\patrick}[1]{}
    \newcommand{\andy}[1]{}
    \newcommand{\brendan}[1]{}
\fi

\title{Reaching Univalency with Subquadratic Communication}

\author{Andrew Lewis-Pye}

\begin{abstract}

The Dolev-Reischuk lower bound establishes that any deterministic Byzantine Agreement (BA) protocol for $n$ processors tolerating $f$ faults requires $\Omega(f^2+n)$ messages. But what exactly does this quadratic cost pay for? Even the minimal requirement that every correct processor \emph{receive at least one message} already necessitates $\Omega(f^2 + n)$ messages. This raises a fundamental question: is the Dolev-Reischuk bound about the difficulty of \emph{reaching univalency}---the point at which the protocol's outcome is determined---or merely about \emph{disseminating} the outcome to all processors afterward?

We resolve this question by showing that reaching univalency does \emph{not} require quadratic communication. Specifically, we introduce $\epsilon$-BA, a relaxation allowing an $\epsilon$-fraction of correct processors to output incorrectly, and prove it can be solved deterministically with $O(n \log n)$ communication complexity when $f < n(1/3 - \epsilon)$. Crucially, any $\epsilon$-BA protocol can serve as the first phase of a full BA protocol: after $\epsilon$-BA, a single all-to-all exchange and majority vote completes BA. Since the outcome is already determined after $\epsilon$-BA, this demonstrates that the quadratic cost in Dolev-Reischuk stems entirely from dissemination, rather than from reaching univalency. We also define Extractable BA for authenticated settings, capturing when processors collectively hold enough signed messages to determine the agreed value, and show it can be solved with communication complexity $O(f \log f)$.
\end{abstract}

\begin{document}

\maketitle

\section{Introduction}

Distributed systems rely on a fundamental primitive: getting processors to agree. Whether coordinating a distributed database, replicating a state machine, or finalising blocks in a blockchain, processors must reach consensus on a common value despite failures. Byzantine Agreement (BA)~\cite{lamport1982byzantine} formalises this problem in its most adversarial form: processors must agree even when some subset of them---the ``Byzantine'' processors---may behave arbitrarily, sending conflicting messages or colluding to disrupt the protocol.

A natural question is: how much communication is required to reach Byzantine Agreement? The Dolev-Reischuk lower bound~\cite{dolev1985bounds} provides a decisive answer: any deterministic protocol  for $n$ processors  tolerating $f$ Byzantine faults must exchange $\Omega(f^2+n)$ messages. This result, now four decades old, remains among the most fundamental in distributed computing. Modern protocols like PBFT~\cite{castro1999practical} and HotStuff~\cite{yin2019hotstuff} are routinely evaluated against it, and considerable effort has gone into achieving this quadratic bound or circumventing it through randomisation~\cite{king2011breaking}.

But what, precisely, does this quadratic cost pay for?

\subsection{Dissemination vs.\ Univalency}

To understand the Dolev-Reischuk bound, it helps to decompose Byzantine Agreement into two conceptually distinct phases:
\begin{enumerate}
    \item \textbf{Reaching univalency:} The point at which the protocol's outcome becomes determined, even if no processor yet knows what it is.
    \item \textbf{Dissemination:} Ensuring each correct processor receives enough information to output the determined value.
\end{enumerate}

A trivial adaptation of the Dolev-Reischuk argument shows that \emph{any} deterministic protocol ensuring every correct processor receives at least one message---regardless of what processors do with those messages---already requires $\Omega(f^2 + n)$ messages.\footnote{For completeness, we include the proof in Appendix \ref{AppA}. However, the proof is essentially identical to the original proof by  Dolev and Reischuk, together with the observation that the same proof also suffices to give the modified claim.} This suggests that much of the quadratic cost may be attributed to dissemination rather than reaching univalency.

This observation raises a natural question:

\begin{quote}
\emph{Is the Dolev-Reischuk lower bound fundamentally about the difficulty of reaching univalency, or is it primarily about the cost of disseminating the outcome once univalency is reached?}
\end{quote}

\subsection{Our Contribution: Univalency is Cheap}

We resolve this question definitively: \textbf{reaching univalency does not require quadratic communication}. The quadratic cost in Dolev-Reischuk is entirely attributable to dissemination. To establish this, we introduce relaxations of Byzantine Agreement that can serve as the first phase of a two-phase BA protocol: the first phase reaches univalency with subquadratic communication, and the second phase handles dissemination. Formal definitions of $\epsilon$-Byzantine Agreement and Extractable Byzantine Agreement are given in Section \ref{setup}. 

\vspace{0.2cm} 
\noindent \textbf{$\epsilon$-Byzantine Agreement.} We allow an $\epsilon$-fraction of correct processors to output incorrectly. Crucially, any protocol solving $\epsilon$-BA can be used as the first phase of a BA protocol: after running $\epsilon$-BA, processors send their outputs to all others and decide by majority vote. Since all but $\epsilon n$ correct processors output the same value $v$  in $\epsilon$-BA, this majority vote yields $v$ for all correct processors, completing BA. We prove:

\begin{theorem}\label{thm:epsilon-ba-intro}
Consider the unauthenticated and synchronous setting. For any constant $\epsilon \in (0,1/3)$, $\epsilon$-BA can be solved deterministically with $O(n \log n)$ communication complexity when $f < n(1/3 - \epsilon)$.
\end{theorem}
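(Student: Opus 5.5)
The idea is to run a classical quadratic BA protocol among a small committee, and then spread the committee's decision to everyone else over a sparse expander-like graph, with deterministic fault tolerance coming from a counting argument.

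\textbf{Step 1 (committee agreement).} Fix the first $c$ processors, where $c$ is chosen later. If fewer than a third of them are faulty, they run an unauthenticated synchronous BA protocol with $O(c^2)$ communication, for instance phase-king. Since $f$ may be as large as $n(1/3-\epsilon)$, a fixed committee could be entirely Byzantine, so a single committee does not work. Instead I would partition the $n$ processors into $m=n/\log n$ groups of size $\log n$ and have each group run phase-king internally, for a total cost of $O(m\log^2 n)=O(n\log n)$.

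\textbf{Step 2 (aggregation across groups).} The groups whose fault fraction is below $1/3$ (the ``good'' groups) reach internal agreement. By averaging, at least a constant fraction of groups are good, because $f<n/3$. Good groups may still agree on different values, though, so the outputs must be combined. For this I would arrange the groups as the leaves of a recursive structure, along the lines of the hierarchical protocols of King--Saia, but deterministic. Each level runs a majority or phase-king style step among group representatives, where a group speaks through the majority of its members' messages. This is routed over a fixed bounded-degree expander connecting groups, with each group forwarding to $O(1)$ neighbouring groups and all members of a group talking to all members of a neighbouring group, at a cost of $O(\log^2 n)$ per edge. That gives $O(n\log n)$ per round.

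\textbf{Step 3 (validity and agreement for most processors).} Validity: if all correct processors hold input $v$, every good group outputs $v$, and majority-based forwarding preserves $v$ in every good group. Agreement for all but $\epsilon n$ correct processors comes from the expander mixing lemma. Let $B$ be the set of bad groups, which has constant density below $1/3$ in the right sense because of the $\epsilon$ slack. Then only an $\epsilon$-fraction of good groups can have a majority of bad neighbours. Running $O(\log n)$ rounds of majority dynamics, or alternatively a constant number of phase-king phases with king groups chosen so that some king group is good, drives all but an $\epsilon$-fraction of good groups to a common value.

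\textbf{Main obstacle.} The hardest step is establishing deterministic convergence to a common value despite an adversarial choice of faults. Majority dynamics on an expander with adversarial nodes need not converge to agreement. My first attempt would be a phase-king variant in which ``kings'' are groups, with $O(1)$ phases per level and king groups chosen deterministically so that at least one king is good. The good king's value is then spread through the expander, and the mixing lemma is used to show that after it, all but $\epsilon n$ correct processors adopt it and retain it thereafter. The budget would be kept at $O(n\log n)$ by bounding the number of phases through a recursive halving of the number of active groups, which gives a geometric sum.
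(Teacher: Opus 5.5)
There is a genuine gap at the core of Steps 2--3: the claim that the set $B$ of bad groups has density below $1/3$ is false. Your protocol is deterministic and fixed in advance, so the adversary can concentrate faults. Put $\lceil g/3\rceil$ faulty processors into each of about $3f/g$ groups of size $g=\log n$. This stays within $f<n(1/3-\epsilon)$, yet makes roughly a $(1-3\epsilon)$-fraction of all groups bad. For small $\epsilon$ that is nearly all of them, and it exceeds $1/3$ whenever $\epsilon<2/9$. The expander-mixing step (``only an $\epsilon$-fraction of good groups have a majority of bad neighbours'') then has nothing to stand on.

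Worse, $\epsilon$-Agreement is about correct \emph{processors}, not groups. In this configuration the bad groups contain about $\tfrac23(1-3\epsilon)n$ correct processors, which is more than $\epsilon n$ for small $\epsilon$. Your plan gives them no reliable way to learn the common value, since their group's internal phase-king is unreliable and they only ``speak'' through it.

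The same concentration defeats your fallback. Choosing $O(1)$ king groups of size $O(\log n)$ ``so that at least one king is good'' cannot be done deterministically, because $O(\log n)$ faults suffice to spoil all of them. Separately, $O(\log n)$ rounds of majority dynamics at $O(n\log n)$ each already overshoots the budget.

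The missing idea is how to make one sufficiently honest committee suffice, with no bound on how many committees are bad. The paper splits $\Pi$ into two halves $C_1,C_2$. One of them has fault fraction at most $f/n<1/3-\epsilon$, so running the \emph{same} $\epsilon$-BA protocol recursively on it is correct by induction; the other half may be arbitrarily corrupted. Each of the two rounds is a phase-king round with the corresponding half acting as king, preceded by a graded step with thresholds $k(2/3-\epsilon/2)$, $2k/3$ and $k/3$. The grading gives the key property:
\begin{itemize}
\item If at least $n\epsilon/4$ correct processors get grade $2$ on $b$, then every well-sampled correct processor holds $b$.
\item Hence fewer than $|C_r|\epsilon$ correct members of $C_r$ input anything else, and the good committee's $\epsilon$-Validity forces it to output $b$.
\item So grade-2 holders and committee-followers agree, and a bad round cannot undo agreement produced by a good one.
\end{itemize}
All non-recursive communication uses a fixed set of $k=O(1)$ samples per processor per step. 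Its existence follows from a Chernoff bound plus a union bound over the at most $4^n$ step configurations, allowing fewer than $n\epsilon/24$ processors per step to have unrepresentative samples. This gives $T(n)=2T(n/2)+O(kn)=O(n\log n)$, with no groups, no expanders, and no density assumption on bad committees.
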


\vspace{0.2cm} 
\noindent \textbf{Extractable Byzantine Agreement.} In authenticated settings, i.e., when signatures are available, we define a variant where correct processors must collectively hold enough signed messages to determine the outcome, even if individual processors cannot extract it without further communication. Again, we will see that Extractable BA can serve as the first phase of a BA protocol that suffices to reach univalency. We prove:

\begin{theorem}\label{thm:extractable-ba-intro}
Consider the authenticated and synchronous setting and suppose  $\epsilon\in (0,1/3)$. If $f < n(1/3 - \epsilon)$, Extractable BA can be solved deterministically with $O(f \log f)$ communication complexity.
\end{theorem}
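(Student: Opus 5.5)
The plan is to reduce Theorem~\ref{thm:extractable-ba-intro} to Theorem~\ref{thm:epsilon-ba-intro}. Run $\epsilon$-BA only inside a small committee whose size is linear in $f$. Then argue that the signed outputs held by the correct committee members collectively determine the agreed value.

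\textbf{Step 1: choosing the committee.} Since $f < n(1/3-\epsilon)$, I can take $C$ to be the first $m$ processors in the fixed ordering. Here $m = \min\{n, \lceil f/(1/3-\epsilon)\rceil + 1\}$, so $f < m(1/3-\epsilon)$. Because $\epsilon$ is a constant, $m = O(f)$ when $f \geq 1$. The degenerate case $f=0$ is handled by taking $m = 1$: the single member signs its input, with $O(1)$ communication. At most $f$ members of $C$ are faulty, so $C$ with the restricted adversary satisfies the hypothesis of Theorem~\ref{thm:epsilon-ba-intro} with parameters $(m,f)$. Processors outside $C$ stay silent; this is what lets the cost depend on $f$ rather than $n$.

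\textbf{Step 2: running $\epsilon$-BA in $C$.} Members of $C$ run the protocol of Theorem~\ref{thm:epsilon-ba-intro} on their inputs. If signatures are needed, they sign every message; this changes communication only by a constant factor in the word model. The cost is $O(m\log m) = O(f\log f)$. By the $\epsilon$-BA guarantee, all but at most $\epsilon m$ correct members of $C$ output a common value $v$. If every correct processor has the same input $v'$, then so do all correct members of $C$, and $\epsilon$-BA validity gives $v = v'$. Each correct member then signs its output locally, which costs no communication.

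\textbf{Step 3: extraction.} I would define the extraction rule on the union $S$ of signed outputs held by correct processors. The rule outputs the value carrying the most distinct committee signatures. The value $v$ receives at least $(m-f) - \epsilon m$ signatures. Any other value receives at most $f + \epsilon m$, counting all Byzantine signatures and all deviating correct members. So it suffices that $m - f - \epsilon m > f + \epsilon m$, i.e.\ $m(1-2\epsilon) > 2f$. This follows from $f < m(1/3-\epsilon)$, because $2(1/3-\epsilon) = 2/3 - 2\epsilon < 1-2\epsilon$. Byzantine processors cannot forge correct members' signatures, and I assume the definition of Extractable BA restricts $S$ to messages held by correct processors. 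Under these two facts, the extracted value is $v$ and is fixed by the end of the protocol, which gives univalency together with validity.

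\textbf{Expected obstacle.} The main difficulty is matching the formal definition of Extractable BA in Section~\ref{setup}, which I have not seen. If it instead requires a threshold certificate, e.g.\ $f+1$ or a quorum of signatures on one value, I would enlarge $m$ by a constant factor so that $(1-\epsilon)m - f$ exceeds the required threshold. This keeps $m=O(f)$. A secondary check is whether the definition imposes validity on inputs of processors outside $C$. For a unanimity-style validity condition this is harmless, as argued in Step 2. I would also verify that Theorem~\ref{thm:epsilon-ba-intro} is stated uniformly enough in $n$ that applying it with $n=m$ gives a bound of the form $O(m\log m)$, with the hidden constant depending only on $\epsilon$.
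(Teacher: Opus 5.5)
Your architecture matches the paper's (Proposition~\ref{prop:reduction} combined with Theorem~\ref{thm:epsilon-ba-intro}): a committee of $m=\Theta(f)$ processors with $f<m(1/3-\epsilon)$ runs $\epsilon$-BA, members sign their outputs, and extraction takes a majority. The gap is in the extraction rule, and it sits exactly at the point you assumed away.

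Extractable BA requires \emph{Consistency}: $\mathcal{F}(M)\preceq\mathcal{F}(M_a)$ for \emph{every} $M\subseteq M_a$. Here $M_a$ contains all messages sent or received by anyone, Byzantine processors included. It is not enough for $\mathcal{F}$ to be correct on the set held by correct processors. Your rule, ``output the value with the most distinct committee signatures'', has no way to abstain, so it already fails on tiny subsets. If $M$ consists of one Byzantine member's signed output $1-v$, then $\mathcal{F}(M)=1-v$, which is not a prefix of $\mathcal{F}(M_a)=v$. The same happens with one signed output from a correct member of the exceptional set $X$.

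The missing idea is that $\mathcal{F}$ must return the empty string unless $M$ carries enough evidence. The threshold must be chosen so that \emph{every} subset meeting it already forces $v$. The paper returns the empty string unless $M$ contains signed outputs from at least $m-f$ distinct committee members, and otherwise returns the majority. Any such $M$ has at most $f$ Byzantine signers and fewer than $\epsilon m$ signers from $X$. So it has more than $m-2f-\epsilon m$ signatures on $v$ and fewer than $f+\epsilon m$ on $1-v$. For $v$ to win, this needs $m(1-2\epsilon)>3f$, which follows from $f<m(1/3-\epsilon)$. Your inequality $m(1-2\epsilon)>2f$ only controls the full sets $M_c$ and $M_a$, not adversarially chosen subsets. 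A per-value threshold rule would also work, but the threshold must be about $f+\epsilon m$. Your suggested $f+1$ fails, because $1-v$ can collect $f$ Byzantine signatures plus signatures from members of $X$.

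There is a second, smaller problem. Signing outputs ``locally, with no communication'' does not place them in $M_c$, which consists of messages \emph{sent or received} by correct processors. Termination, $\mathcal{F}(M_c)\in\{0,1\}$, would then fail, since $M_c$ would not meet any signature threshold. The paper has each committee member send its signed output to a single fixed processor. This ensures $M_c$ contains signatures from all $\geq m-f$ correct members, at an extra cost of $m=O(f)$, for a total of $C(m)+m=O(f\log f)$.
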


Since both $\epsilon$-BA and Extractable BA serve as the univalency phase of BA, our results demonstrate that reaching univalency is achievable with subquadratic---indeed, nearly linear---communication. The full cost of BA is thus the sum of two components: $O(f \log f)$ to reach univalency, plus a quadratic cost for dissemination.

\subsection{Paper structure} 
The remainder of the paper is structured as follows: 
\begin{itemize} 
\item Section \ref{rw} describes related work. 
\item Section \ref{setup} describes the formal model, defines BA, $\epsilon$-BA, and Extractable BA, and establishes reductions between them. 
\item Section \ref{pksection} recalls Recursive Phase King (on which our protocol for $\epsilon$-BA is based). 
\item Section \ref{random} describes and analyses a randomised protocol for $\epsilon$-BA. 
\item Section \ref{sec:derandomization} describes how to derandomise the protocol of Section \ref{random}. 
\item Section \ref{discuss} discusses remaining open problems. 
\item Appendix \ref{AppA} observes that any protocol in which every processor receives a message has message complexity $\Omega(n+f^2)$. 
\item Appendix \ref{fbound} gives a simple proof that reaching univalency for BA requires message complexity $\Omega(f)$.

\end{itemize} 

\subsection{Related Work} \label{rw}

Below we refer to both \emph{message complexity}, which is the maximum number of messages sent by all correct processors (combined) in any execution, and \emph{communication complexity}, which counts instead the number of bits. Byzantine Agreement (BA) and Byzantine Broadcast (BB) are defined in Section \ref{setup}. In the synchronous setting, standard reductions (see, e.g., \cite{lynch1996distributed}) show that any protocol for BA can be used to solve BB with only $O(n)$ additional messages of constant size, so lower bounds for BB carry over to BA. The converse reduction is less direct: while BB can be used to solve BA, the overhead may be greater in some settings, meaning BA can require higher message or communication complexity than BB.

\vspace{0.2cm} 
\noindent \textbf{The results of Dolev and Reischuk.} Dolev and Reischuk~\cite{dolev1985bounds} showed that any protocol for BB\footnote{Dolev and Reischuk use the term ``Byzantine Agreement'' in~\cite{dolev1985bounds} to refer to the task now most commonly called ``Byzantine Broadcast'' (as defined in Section \ref{setup}).} has message complexity $\Omega(f^2+n)$. While they explicitly consider Byzantine faults, the proof is easily seen to hold for omission faults as well. They also showed how to transform any protocol solving BB or BA with message complexity $O(n^2)$ into a protocol for the authenticated setting with message complexity $O(f^2+n)$. Additionally, they established that any protocol solving BB requires communication complexity $\Omega(nf)$ in the authenticated setting, and message complexity $\Omega(nf)$ in the unauthenticated setting. To our knowledge, it remains open whether BB or BA  can be solved with communication complexity $O(n+f^2)$ for omission faults.

\vspace{0.2cm} 
\noindent \textbf{Achieving the quadratic bound.} The positive result of Dolev and Reischuk, achieving $O(f^2+n)$ message complexity, assumes a protocol for BB or BA with message complexity $O(n^2)$. For BB in the authenticated setting, this is achieved by a simple variant of the protocol of Dolev and Strong~\cite{dolev1983authenticated}. In the unauthenticated setting with $f < n/3$, Berman, Garay, and Perry~\cite{berman1992bit} used the Recursive Phase King protocol (as described in Section \ref{pksection}) to show that  BA (and hence BB) can be solved with communication complexity $O(n^2)$. Neither BA nor BB is solvable in the unauthenticated setting when $f \geq n/3$~\cite{fischer1986easy}. In the authenticated setting with $f < n(1/2 - \epsilon)$ for some constant $\epsilon > 0$, Momose and Ren~\cite{momose2020optimal} showed that BA can be solved with communication complexity $O(n^2)$. It is also straightforward to see that, when $f < n/2$, any protocol solving BA or BB with communication complexity $O(n^2)$ can be converted into a protocol with communication complexity $O(fn)$: first run the given protocol on a committee of size $O(f)$, have committee members send their outputs to all processors, and then decide by majority vote.

\vspace{0.2cm} 
\noindent \textbf{Crash faults.} Galil, Mayer, and Yung~\cite{galil1995resolving} showed that Broadcast can be solved for crash faults with message complexity $O(n)$. A simple proof that Broadcast can be solved for crash faults with communication  complexity $O(n)$ also appears in~\cite{lewispyeAQC}.

\vspace{0.2cm} 
\noindent \textbf{Almost-Everywhere Agreement.} Dwork, Peleg, Pinter, and Upfal~\cite{dwork1986fault} introduced \emph{almost-everywhere agreement} for bounded-degree networks, where some processors may be ``surrounded'' by Byzantine nodes and cannot be reached. While their notion allows some correct processors to output incorrectly, it differs from our notion of $\epsilon$-BA in several significant regards, including using a different notion of Validity. The analysis of \cite{dwork1986fault} also has different motivations: almost-everywhere agreement arose from topological constraints (some processors are unreachable), while $\epsilon$-BA relaxes agreement on complete networks. Their analysis is concerned with describing network topologies upon which different resiliences are obtainable, rather than achieving tight bounds on communication and message complexity in complete networks. Abraham and Stern~\cite{abraham2024dolev} recently extended Dolev-Reischuk-style lower bounds to Crusader Broadcast and to an ``all-but-$m$'' relaxation. For the latter, they show that any $(2/3+\epsilon)$-correct probabilistic protocol for all-but-$(f^c-1)$ Crusader Broadcast tolerating $f$ strongly adaptive Byzantine faults must send at least $\Omega(\epsilon n f^{1-c})$ messages in expectation. Their setting differs from ours in several respects: they analyse Crusader Broadcast (where $\bot$ outputs are permitted when the sender is faulty), their lower bounds require a strongly adaptive adversary capable of simulating honest parties, and they consider probabilistic protocols. In contrast, we provide deterministic upper bounds for $\epsilon$-BA against static adversaries.

\vspace{0.2cm} 
\noindent \textbf{Implicit Byzantine Agreement.} Kumar and Molla~\cite{kumar2024sublinear} recently defined \emph{implicit BA}, where only a subset of honest nodes must agree. They achieve $\tilde{O}(\sqrt{n})$ message complexity using a global coin.  Our work differs in providing deterministic protocols and in our focus on the univalency question.

\vspace{0.2cm} 
\noindent \textbf{Approximate Agreement.} A long line of work \cite{dolev1986reaching,mahaney1985inexact,fekete1986asymptotically,fekete1990asymptotically,fekete1987asynchronous,kieckhafer1994reaching,abraham2004optimal,mendes2013multidimensional} studies \emph{approximate agreement}  where processors with real-valued inputs must output values within $\epsilon$ of each other. This is a different relaxation: all processors must agree approximately in value space, rather than allowing some processors to disagree entirely.

\vspace{0.2cm} 
\noindent \textbf{Randomised protocols.} There is also extensive research \cite{abraham2019synchronous,abraham2018validated,cachin2001secure,feldman1988optimal,katz2006expected,abraham2019communication,chen2016algorand,king2011breaking} analysing how randomness can be used to achieve quadratic or even sub-quadratic expected communication complexity. Our contribution is demonstrating that \emph{deterministic} protocols can reach \emph{univalency} with subquadratic communication.

\vspace{0.2cm} 
\noindent \textbf{Extension protocols.} A number of papers \cite{cachin2005asynchronous,lu2020dumbo,miller2016honey,nayak2020improved} consider so-called \emph{extension} protocols, which look to optimise communication complexity when inputs lengths $\ell$ are large compared to $n$. When $n\ell$ dominates $n^2$, such protocols look to achieve optimal $O(n\ell)$ communication complexity, often using a BA oracle for short inputs. 

\vspace{0.2cm} 
\noindent \textbf{Extractable SMR.} The blockchain literature has extensively studied data availability sampling~\cite{al2021fraud}, where nodes verify data availability through random sampling without downloading entire blocks. The notion of Extractable State-Machine-Replication (Extractable SMR) \cite{lewis2025morpheus} was introduced to provide a  theoretical framework connecting these practical constructions to classical SMR. The notion of Extractable BA is the analogous notion for Byzantine Agreement.  

\section{Model and Definitions} \label{setup} 

\subsection{System Model}

\textbf{The set of processors}. We consider a set $\Pi=\{ p_1,\dots,p_n\} $ of $n$ processors, at most $f$ of which may be \emph{faulty}.\footnote{For simplicity of presentation, we initially consider a static adversary. As observed in Section \ref{discuss}, however, the fact that our protocol is deterministic means that our results carry over to adaptive adversaries.}  Processors that are not faulty are \emph{correct}. Faulty processors may display Byzantine (i.e., arbitrary) behaviour.  

\vspace{0.2cm} 
\noindent   \textbf{Time}. Time is divided into discrete time-slots $t=0,1,2,\dots$. 

\vspace{0.2cm} 
\noindent \textbf{Message delivery}. Processors communicate using  point-to-point channels.   We consider the standard synchronous (or `lock-step') model, in which a message sent at time-slot $t$ is received at time-slot $t+1$. Our results extend straightforwardly to a setting in which there is some known $\Delta$ such that a message sent at time-slot $t$ is received at some time-slot in $(t,t+\Delta]$ (carry out the instructions for time-slot $t$ at time-slot $t\cdot \Delta$). 

\vspace{0.2cm} 
\noindent \textbf{Processors are modelled as state-transition-diagrams}. For the sake of concreteness, we model processors as state-transition-diagrams. This means that, at each time-slot $t$, each processor $p_i$ begins in some \emph{state} $x$ and then receives a set of messages on each of its communication channels.  Given $x$ and a specification of which set of messages has arrived on each channel  at $t$,  the state-transition-diagram then determines: (a) The message (if any) that $p_i$ sends along each of its channels at $t$, and; (b) The state in which  $p_i$ begins time-slot $t+1$. So, according to this model, the state-transition-diagram for each processor specifies the protocol. There is no requirement that the state-transition-diagram be finite, i.e., it may contain an infinite number of states. Faulty processors can have arbitrary state-transition-diagrams. 

\vspace{0.2cm} 
\noindent \textbf{Inputs and outputs}. Certain states of the state-transition-diagram for $p_i$ are labelled as `input states', and $p_i$ begins time-slot 0 in one of these input states.  The input to $p_i$ determines which input state it starts in. The number of processors $n$, $f$,  and also $p_i$'s name `$p_i$' are given to $p_i$ as part of its input. Certain states are also labelled as `output' states. When processor $p_i$ first reaches such a state $x$ at time $t$, its output is determined by the state $x$.

\vspace{0.2cm} 
\noindent \textbf{The authenticated setting}. The information-theoretic model above, in which processors are modelled as state-transition diagrams (with no restriction on the set of `allowable' diagrams) describes the \emph{unauthenticated} setting. In the \emph{authenticated} setting, we assume that a signature scheme is available. This can be simply formalised using the approach in \cite{lewis2023permissionless}. If we work in the authenticated setting, this means that: 
\begin{itemize} 
\item For each $p_i\in \Pi$ and each message $m$, there exists a special message, denoted  $\langle m \rangle_{i}$, to be thought of as the `message $m$ signed by $p_i$'. 
\item For each $p_i\in \Pi$ and $m$, processors other than $p_i$ cannot send any message containing the sequence  $\langle m \rangle_{i}$ until any time-slot at which they have received a message containing this sequence. This means we restrict the set of allowable state-transition-diagrams (even for Byzantine processors) to those which obey this requirement.  
\end{itemize} 

\vspace{0.2cm} 
\noindent \textbf{Message complexity and communication complexity}. The message complexity of a protocol is the maximum number of messages sent by all correct processors (combined) in any execution. The communication complexity  of a protocol is the maximum number of bits sent by all correct processors in any execution.

\vspace{0.2cm} 
\noindent \textbf{Omission and crash faults}. While we are mainly concerned with Byzantine faults in this paper, Appendices \ref{AppA} and \ref{fbound} also consider crash and omission faults. In the crash-fault model,  faulty processors have the state-transition-diagram specified by the protocol and execute instructions correctly, except that they may \emph{crash} at some time-slot $t$. If $p_i$ crashes at $t$, then it may send an arbitrary subset of the messages that it is instructed to send at $t$, and $p_i$ then performs no further action at later time-slots (remaining in the state that it was in at  time-slot $t$ thereafter, while sending and receiving no further messages). 

In the omission-fault model,  faulty processors have the state-transition-diagram specified by the protocol and execute instructions correctly, except that they may fail to send or receive messages. Specifically, at each time-slot, a processor $p$ with omission-faults may take an arbitrary subset of the messages that have arrived on each channel, and will then act (at that time-slot) exactly like a correct processor that has received this (potentially reduced) set of messages, except that it may send an arbitrary subset of the messages that the instructions require it to send on each channel.

\subsection{Byzantine Agreement, Byzantine Broadcast and Univalency} 

For the sake of simplicity when analysing communication complexity, we consider \emph{binary} versions of BA and BB, where processors are given an input in $\{ 0, 1 \}$. However, our protocol works unchanged for the more general version of BA  in which processors are given inputs in some known finite set.

\vspace{0.2cm}  
\noindent \textbf{Byzantine Agreement}. Each processor in $\Pi$ is given an input in $\{ 0, 1 \}$. The  following conditions must be satisfied:
\begin{itemize} 
\item \textbf{Termination}. All correct processors  give an output in $\{ 0, 1 \}$. 

\item \textbf{Agreement}. No two correct processors give different outputs.

\item \textbf{Validity}. If all correct processors have the same input $v$, then $v$ is their common output. 
\end{itemize} 

\vspace{0.2cm}  
\noindent \textbf{Byzantine Broadcast}.  One processor, designated the `broadcaster', is given an input in $\{ 0, 1 \}$. All processors are given the name of the broadcaster. 
The following conditions must be satisfied: 
\begin{itemize} 
\item \textbf{Termination}. All correct processors give an output in  $\{ 0, 1 \}$. 

\item \textbf{Agreement}. No two correct processors  give different outputs.

\item \textbf{Validity}. If the broadcaster is correct and has input $v$, all correct processors output $v$.
\end{itemize}

\vspace{0.2cm}  
\noindent \textbf{Executions and  univalent runs}.  An \emph{execution} is a complete description of the inputs to each processor, and the messages sent by each processor at each time-slot. For each $k\in \mathbb{N}$, a \emph{$k$-run} is a complete description of the inputs to each processor, and the messages sent  by each processor at each time-slot $\leq k$. If $r$ is a  $k$-run, and if  all processors receive the same inputs and  send the same messages at time-slots $\leq k$ in $r$ as they do in an execution $E$, then we say the execution $E$ \emph{extends} $r$. A $k$-run $r$ is \emph{univalent} if there exists $v$ such that all correct processors output $v$ in all executions extending  $r$.

\subsection{$\epsilon$-Byzantine Agreement}

We define a version of BA that allows up to $n\epsilon$ correct processors to output `incorrectly'.

\vspace{0.2cm} 
\noindent \textbf{Defining $\epsilon$-BA}. 
For $\epsilon \in (0, 1/2)$, a protocol solves $\epsilon$-BA  if each processor receives an input in $\{0, 1\}$ and produces an output in $\{0, 1\}$, satisfying:
\begin{itemize}
    \item \textbf{Termination:} Every correct processor produces an output.
    \item \textbf{$\epsilon$-Agreement:} There exists a set $X$ of correct processors with $|X| < n\epsilon$ such that all correct processors outside $X$ output the same value $v$. (The protocol need not identify $X$.)
    \item \textbf{$\epsilon$-Validity:} If fewer than $n\epsilon$ correct processors have an input other than some value $x$, then $v = x$.
\end{itemize}

The crucial observation is that $\epsilon$-BA serves as the univalency phase of BA:

\begin{proposition}\label{prop:epsilon-to-ba}
Suppose $\epsilon\in (0,1/2)$ and  $f < n(1/2 - \epsilon)$. Any protocol for $\epsilon$-BA can be extended to a full BA protocol by adding a dissemination phase: each processor sends its $\epsilon$-BA output to all others, then outputs the majority value received.
\end{proposition}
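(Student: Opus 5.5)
The plan is to check the three BA conditions for the two-phase protocol directly. The protocol first runs the given $\epsilon$-BA protocol, taking each processor's BA input as its $\epsilon$-BA input. Let $o_i$ be the output of $p_i$. In one additional time-slot, every processor sends $o_i$ to all others; it also counts its own value. Each processor then outputs the value held by a strict majority of the $n$ values it has (with a fixed tie-break, say $0$). Termination is immediate, because the $\epsilon$-BA protocol terminates and the dissemination phase takes exactly one round. Synchrony ensures that every correct processor receives the value of every other correct processor.

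For Agreement, I will use $\epsilon$-Agreement. It gives a set $X$ of correct processors with $|X|<n\epsilon$ such that every correct processor outside $X$ outputs a common value $v$. Thus at least $n-f-|X| > n - n(1/2-\epsilon) - n\epsilon = n/2$ correct processors report $v$ to every correct processor. Faulty processors may send arbitrary values, or different values to different recipients, and processors in $X$ may report $1-v$. Even so, these contribute at most $f+|X| < n/2$ reports of $1-v$ at any correct processor. So every correct processor sees a strict majority for $v$ and outputs $v$, which gives Agreement.

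For Validity, suppose all correct processors have the same input $x$. Then the number of correct processors with input other than $x$ is $0<n\epsilon$. By $\epsilon$-Validity, the common value $v$ from $\epsilon$-Agreement equals $x$. By the Agreement argument, every correct processor outputs $v=x$.

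The only step needing care is the counting inequality $n-f-|X|>n/2$. This is exactly where $f<n(1/2-\epsilon)$ is used, together with the strictness of $|X|<n\epsilon$. The count also has to hold for every recipient, even though faulty processors equivocate. That is handled by bounding the reports of $1-v$ by $f+|X|$ independently of what the faulty processors send. If messages from silent faulty processors are missing, they are simply treated as reports of $1-v$, and the bound is unchanged.
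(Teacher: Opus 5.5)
Your proposal is correct and follows the same route as the paper's proof. Both arguments use $\epsilon$-Agreement to find more than $n-f-n\epsilon>n/2$ correct processors reporting a common value $v$, so every correct processor sees a majority for $v$, and both then invoke $\epsilon$-Validity to get $v=x$. Your explicit bound of $f+|X|<n/2$ on reports of $1-v$, and your handling of a processor's own value and of missing messages, fill in details the paper leaves implicit.
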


\begin{proof}
After $\epsilon$-BA, at least $n-n\epsilon -f>n/2$ correct processors output the same value $v$. So $v$ is the majority value received by all correct processors, and all correct processors give $v$ as their BA output. Validity for $\epsilon$-BA ensures that if all correct processors are given the same input, then $v$ is equal to this common input. 
\end{proof}

\subsection{Extractable Byzantine Agreement}

\textbf{Motivation.} Many protocols for State Machine Replication (SMR)~\cite{schneider1990implementing} do not explicitly specify the mechanism by which correct processors retrieve the full finalised log of transactions. Instead, they ensure \emph{data availability} for the log. In Narwhal~\cite{danezis2022narwhal}, for example, blocks of transactions are sent to all processors, who then send acknowledgement messages confirming receipt. A set of $f+1$ such acknowledgements acts as a \emph{data availability certificate} for the block: it proves that at least one correct processor has received the block, and can therefore supply it to others upon request. Consensus can then be run on certificates rather than full blocks. 

The notion of \emph{Extractable SMR} was introduced in~\cite{lewis2025morpheus} to formalise the task being solved by such protocols. Roughly speaking, the requirement is that, while an individual processor may not be able to determine the full finalised log, the correct processors collectively receive sufficient messages to determine it. 

In this section, we introduce an analogous notion for BA.

\vspace{0.2cm} 
\noindent \textbf{Defining Extractable BA}. 
Write $\sigma \preceq \tau$ to denote that the string $\sigma$ is a prefix of the string $\tau$. 
For Extractable BA, we require the existence of a function $\mathcal{F}$ (to be specified by the protocol), which takes sets of (signed) messages as input and outputs a string, either 0, 1 or the empty string (of length 0). 
Let $M_c$ be all messages sent or received by correct processors over the protocol execution. Let $M_a$ be  all messages sent or received by any processor over the protocol execution. We require: 

\begin{itemize} 

\item \textbf{Consistency}: For all $M\subseteq M_a$: $\mathcal{F}(M)\preceq \mathcal{F}(M_a)$. 

\item \textbf{Termination}: $\mathcal{F}(M_c)\in \{ 0, 1 \}$. 

\item \textbf{Validity}: If all correct processors have the same input $x$, then $\mathcal{F}(M_c)=x$. 
\end{itemize} 

To understand this definition, note that, since  $\mathcal{F}(M_c)\in \{ 0, 1 \}$, it follows from Consistency that $\mathcal{F}(M)=\mathcal{F}(M_c)$ for \emph{any} set of messages $M$ with $M_c\subseteq M \subseteq M_a$. So the consistency condition ensures monotonicity: once a subset of messages ``proves'' a value, no superset can prove a different value.

\vspace{0.2cm} 
Let $M_c[t]$ be all messages sent or received by correct processors by time-slot $t$. We say a protocol for Extractable BA is \emph{finitely terminating} if there exists $t^*$ (depending on the protocol and $n$) such that Termination is always satisfied w.r.t.\ $t^*$, i.e., every execution satisfies the condition that $\mathcal{F}(M_c[t^*])\in \{ 0, 1 \}$.  As with $\epsilon$-BA, Extractable BA serves as the univalency phase of BA:

\begin{proposition}\label{prop:extractable-to-ba}
Any finitely terminating protocol for Extractable BA can be extended to a full BA protocol: after executing Extractable BA, each processor broadcasts all its messages to all others, then applies $\mathcal{F}$ to the collected messages.
\end{proposition}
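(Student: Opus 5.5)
The plan is to run the given finitely terminating Extractable BA protocol up to time-slot $t^*$. At $t^*$, each correct processor $p_i$ sends to every other processor the set $S_i$ of all messages it has sent or received by $t^*$. At time-slot $t^*+1$, each correct processor $p_j$ forms the set $M^j$ consisting of $S_j$ together with every set it has just received, and outputs $\mathcal{F}(M^j)$. If $\mathcal{F}(M^j)$ is the empty string, it outputs a default value such as $0$; the argument below shows this never happens. I will check Termination, Agreement and Validity in turn.

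\textbf{Key containment.} The central step is to show that $M_c[t^*] \subseteq M^j \subseteq M_a$ for every correct $p_j$.
\begin{itemize}
\item \emph{Lower bound.} Every message in $M_c[t^*]$ was sent or received by some correct processor $p_i$ by $t^*$, so it lies in $S_i$. Since channels are synchronous and $p_i$ is correct, $S_i$ reaches $p_j$ at $t^*+1$.
\item \emph{Upper bound.} Byzantine processors may also send $p_j$ arbitrary sets. Any message $p_j$ can actually collect is a message that was sent or received in the execution, so it lies in $M_a$. In the authenticated setting, unforgeability of signatures prevents faulty processors from fabricating signed messages attributed to correct processors; they can only relay or produce messages that themselves count as sent, and hence lie in $M_a$.
\end{itemize}
I expect the upper bound to be the main subtlety. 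To make it rigorous, I would interpret $M_a$ as including every message sent over the whole extended execution, so that anything a faulty processor forwards is by definition in $M_a$. Alternatively, I would have $\mathcal{F}$ be applied only to the signed protocol messages embedded in the forwarded sets, discarding the outer wrapper messages.

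\textbf{Agreement and Termination.} Apply the Consistency property of the Extractable BA protocol twice:
\[
\mathcal{F}(M_c[t^*]) \preceq \mathcal{F}(M^j) \preceq \mathcal{F}(M_a).
\]
By finite termination, $\mathcal{F}(M_c[t^*]) \in \{0,1\}$. These are strings of length at most one, so a string of length one that is a prefix of another such string equals it. Hence all three values coincide. Therefore every correct processor outputs the same bit $\mathcal{F}(M_a)$, which gives both Termination and Agreement.

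\textbf{Validity.} Suppose every correct processor has input $x$. By the Validity property of Extractable BA, $\mathcal{F}(M_c) = x$. Consistency gives $\mathcal{F}(M_c) \preceq \mathcal{F}(M_a)$, and $\mathcal{F}(M_a)$ equals the common output from the previous step. Since both are single bits, the common output is $x$.
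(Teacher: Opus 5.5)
Your argument is essentially the paper's proof: after dissemination each correct processor holds a set sandwiched between $M_c$ and $M_a$, Termination makes $\mathcal{F}(M_c)$ a bit, Consistency pins every correct processor's extracted value to that bit, and Validity carries over directly. You are more explicit than the paper about the timing via $t^*$ and about needing the collected set to lie inside $M_a$.

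One caveat, which the paper shares, concerns the first link $\mathcal{F}(M_c[t^*])\preceq\mathcal{F}(M^j)$ in your chain. It is not literally an instance of Consistency, since that property only compares against $\mathcal{F}(M_a)$. On its own, Consistency therefore allows $\mathcal{F}(M^j)$ to be the empty string, and then your default output could break Agreement. What you are really using is the monotonicity the paper asserts right after the definition. You can dispense with it by having $p_j$ output the bit $\mathcal{F}(M)$ for any $M\subseteq M^j$ on which $\mathcal{F}$ is non-empty. Such an $M$ exists because $M_c[t^*]\subseteq M^j$, and every such value equals $\mathcal{F}(M_a)$ because $M^j\subseteq M_a$.
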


\begin{proof}
After the dissemination phase, every correct processor holds a superset of $M_c$ (as defined for the execution of Extractable BA). By Termination for Extractable BA, $\mathcal{F}(M_c) \in \{0, 1\}$. By Consistency, all correct processors give the same output for BA, irrespective of which messages are disseminated by faulty processors. 
Validity for BA follows directly from the Validity condition of Extractable BA.
\end{proof}

\subsection{Reduction from Extractable BA to $\epsilon$-BA} 

\begin{proposition}\label{prop:reduction}
Suppose $\epsilon \in (0,1/3)$ and $f < n(1/3 - \epsilon)$.  If $\epsilon$-BA can be solved with communication complexity $C(n)$ (for some function $C$), then Extractable BA can be solved with communication complexity  $C(m) + m$, where $m = \lfloor f/(1/3 - \epsilon)\rfloor +1$.
\end{proposition}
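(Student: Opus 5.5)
The plan is to run the given $\epsilon$-BA protocol on a small committee and certify the result with signatures. Let $C=\{p_1,\dots,p_m\}$ with $m=\lfloor f/(1/3-\epsilon)\rfloor+1$. Then $m>f/(1/3-\epsilon)$, so $f<m(1/3-\epsilon)$. At most $f$ committee members are faulty, so the committee satisfies the resilience hypothesis of the $\epsilon$-BA protocol with $m$ in place of $n$. We may assume $m\le n$, since $f<n(1/3-\epsilon)$ gives $f/(1/3-\epsilon)<n$.

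\textbf{Step 1: inputs.} Committee members need inputs for $\epsilon$-BA that reflect the inputs of all processors, because Validity refers to all correct processors. Each committee member simply uses its own input. If all correct processors share input $x$, then in particular all correct committee members have input $x$, so $\epsilon$-Validity (applied inside the committee) forces the committee's common value $v$ to equal $x$. This costs nothing extra.

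\textbf{Step 2: run and certify.} The committee runs the $\epsilon$-BA protocol, costing $C(m)$. Each committee member then signs its output, $\langle \mathrm{out}, b\rangle_i$. We count this signed message as a single message sent, or alternatively as sent to itself or to one fixed processor. This gives the additive $m$ term, with signatures treated as constant size as the paper implicitly does.

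\textbf{Step 3: define $\mathcal{F}$.} Let $\mathcal{F}(M)=b$ if $M$ contains at least $T$ signed messages $\langle \mathrm{out},b\rangle_i$ from distinct $p_i\in C$, and let $\mathcal{F}(M)$ be the empty string otherwise. The threshold $T$ must meet two conditions.
\begin{itemize}
\item \textbf{Termination.} The correct committee members outside the exceptional set $X$ all output $v$. There are at least $m-f-\epsilon m$ of them, so we need $T\le m-f-\epsilon m$.
\item \textbf{Consistency.} It must be impossible for $M_a$ to contain $T$ signatures for both values. Signatures for $1-v$ can come only from faulty members or members of $X$, giving fewer than $f+\epsilon m$ of them. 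So we need $T\ge f+\epsilon m$.
\end{itemize}
Since $f<m(1/3-\epsilon)$, we have $f+\epsilon m<m/3$ and $m-f-\epsilon m>2m/3$, so $T=\lceil m/2\rceil$ meets both conditions. Each correct member signs exactly one value, and unforgeability ensures faulty processors cannot produce signatures of correct members on the other value. Hence at most one value ever reaches threshold in $M_a$. Because $\mathcal{F}(M_a)\in\{0,1\}$ whenever any subset reaches threshold, monotonicity gives $\mathcal{F}(M)\preceq\mathcal{F}(M_a)$ for all $M\subseteq M_a$. Validity follows from Step 1.

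\textbf{Main obstacle.} The delicate step is Consistency. It has to hold over all of $M_a$, including messages exchanged only among faulty processors. The argument must therefore rely on the signing-model restriction that a correct member's signature appears only on the value it actually signed, and also on the fact that $X$ together with the faulty set is small. The counting above handles both, provided the exceptional set $X$ is bounded by $\epsilon m$ as $\epsilon$-BA guarantees for the committee.
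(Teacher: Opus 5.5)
Your proof is correct and follows the paper's construction: run the $\epsilon$-BA protocol on a committee of size $m$, have each member sign its output and send it to one processor, and let $\mathcal{F}$ read the value off the signed outputs. The only difference is the extraction rule. The paper requires signed outputs from at least $m-f$ committee members and returns the majority value, whereas you use a fixed threshold $\lceil m/2\rceil$ on a single value; together with your counting ($f+\epsilon m<m/3$ and $m-f-\epsilon m>2m/3$), this makes the Consistency check, which the paper leaves implicit, immediate.
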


\begin{proof}
Run $\epsilon$-BA on a committee of $m$ processors. This committee tolerates $f$ Byzantine faults since $f < m(1/3 - \epsilon)$. After $\epsilon$-BA, each committee member signs and sends their output to a single processor. Given any set of messages containing signed outputs from at least $m-f$ committee members, the extraction function $\mathcal{F}$ returns the majority signed value (breaking ties arbitrarily). Given any set of messages containing signed outputs from fewer than $m-f$ committee members, $\mathcal{F}$ outputs the empty string. 
\end{proof}

\section{Background: Recursive Phase King} \label{pksection} 
The protocol we present in Sections \ref{random} and \ref{sec:derandomization} is a modification of Recursive Phase King~\cite{berman1992bit}. As preparation, we briefly recall the key ideas here. Section \ref{bpk} describes (a slight modification of) the basic Phase King protocol, and Section \ref{rpk} describes its recursive variant.

\subsection{Basic Phase King} \label{bpk}

Phase King is a protocol for BA, and so must satisfy Termination, Agreement, and Validity. To explain it, we first consider a simple two-round sub-protocol called \emph{Gradecast}, which satisfies Termination and Validity, but not yet Agreement. 

 \vspace{0.2cm} 
 \noindent \textbf{Gradecast.} As in BA, each processor receives an input $v\in \{ 0, 1 \}$. Now, though, processors output a value in $\{ 0,1 \}$ together with a \emph{grade} in $\{ 0,1,2 \}$, which indicates something about the processor's knowledge regarding other processors' outputs. The protocol is shown in Figure \ref{f3}.

\begin{figure} 
 \vspace{0.2cm} 

\begin{tcolorbox}[colback=gray!5!white,colframe=black!75!black]
 \textbf{Gradecast: Instructions for $p_i$}.

\vspace{0.1cm} 

   \emph{Time-slot 0.} Send $v$ to all processors. 

\vspace{0.1cm} 

   \emph{Time-slot 1.} If there exists $b$ such that at least $n-f$ processors sent $b$ to $p_i$ at time-slot 0, send $b$ to all processors. 
                
                \vspace{0.1cm} 

   \emph{Time-slot 2.} Output as follows: 
   \begin{itemize} 
   \item[(i)]  If there exists $b$ such that $n-f$ processors sent $b$ to $p_i$ at time-slot 1, output $b$ with grade 2. 
    \item[(ii)] Otherwise, if there exists $b$ such that $f+1$ processors sent $b$ to $p_i$ at time-slot 1, output $b$ with grade 1.
     \item[(iii)] Otherwise, output $v$ with grade 0.
     \end{itemize} 
     \end{tcolorbox}
     
     \caption{Gradecast} 
     \label{f3} 
     \end{figure} 
    
     \vspace{0.2cm} 
 Let us observe some simple properties of this protocol. It follows from standard quorum intersection arguments that it cannot be the case that one processor $p_i$ is sent $b$ by $n-f$ processors at time-slot 0, while another processor $p_j$ (possibly $p_i=p_j$) is sent $b'\neq b$ by $n-f$ processors. To phrase this another way, the protocol satisfies \emph{Time-slot 1 Agreement}: 
 
 \begin{quote} 
\textbf{Time-slot 1 Agreement.} If $p_i$ is correct and sends $b$ at time-slot 1, then no correct processor sends $b'\neq b$ at time-slot 1. 
\end{quote}

 By the same argument, when $p_i$ outputs at time-slot 2, it cannot be the case that (i) holds for two distinct values of $b$. It follows directly from Time-slot 1 Agreement that (ii) cannot hold for two distinct values of $b$, since if $f+1$ processors send $b$ to $p_i$ at time-slot 1 then at least one is correct. This suffices to show that the protocol is well-defined: processors are instructed to send at most one value at each time-slot, and the instructions specify a unique output for each processor. 
 
 It is also easy to see that the protocol satisfies a  form of Validity: 
 \begin{quote} 
\textbf{Validity}$^{+}$. If all correct processors have the same input $v$, then they all output $v$ with grade 2. 
\end{quote} 

\noindent Outputting $v$ with grade 2 also implies knowledge about the outputs of other processors: 
\begin{quote} 
 \textbf{Knowledge of Agreement.} If any correct processor outputs a value $b$ with grade 2, then all correct processors output $b$. 
\end{quote} 

\noindent So if a correct processor outputs $b$ with grade 2, then it \emph{knows} all correct processors have output $b$. 
To prove Knowledge of Agreement, note that if a correct processor outputs $b$ with grade 2, then it received $b$ from at least $n-f$ processors at time-slot 1. This means all correct processors received $b$ from at least $n-2f\geq f+1$ processors at that time-slot. 

\vspace{0.2cm} 
\noindent  \textbf{Introducing tie-breaking.} So far, we have a simple protocol in which, if all correct processors start with the same input, they all output that input with grade 2.
What we have not achieved yet is any method for tie-breaking: if the processors start with a mix of inputs, how do we get them to agree on some value? 
 The basic idea is as follows: 
 \begin{itemize} 
 \item We run $f+1$ \emph{rounds}, each with a different \emph{leader}. 
 \item Processors start each round with a certain value---at the start of the first round, this is simply their input. 
 \item Each round begins with an instance of Gradecast, after which processors update their value to be that specified by their output in the Gradecast instance.
 \item After Gradecast, the leader of the round shares their value $v$. Each correct processor changes their value to $v$ \emph{unless} they output the Gradecast instance for this round with grade 2, before progressing to the next round (if this is not the last). 
 \item At the end of the last round, processors output their current value.  
 \end{itemize} 
 
  If we proceed in this way, then Termination will be satisfied. It is also straightforward to see that Validity will be satisfied: if all correct processors start with the same value $v$, then it follows inductively (from Validity$^+$) that they will output $v$ with grade 2 in each Gradecast instance, and will not change their value upon hearing from the leader. 
 
 It is also straightforward to see that, after any round with a correct leader (and in all subsequent rounds), all correct processors will have the same value: if a correct processor does not change their value to that of the correct leader because they output with grade 2 in the Gradecast instance, then Knowledge of Agreement means that the leader must anyway have the same value. Since at least one of the $f+1$ leaders is correct, Agreement is satisfied.

\subsection{Recursive Phase King} \label{rpk}

Phase King requires $f+1$ rounds because we need to ensure there is at least one round with a correct leader. The first such round suffices to ensure all correct processors have the same value, and once this is the case, agreement is maintained in subsequent rounds.

 \vspace{0.2cm} 
 \noindent \textbf{Replacing $f+1$ rounds with two.} The key observation behind Recursive Phase King is that we can replace the $f+1$ rounds with two if we replace leaders with \emph{committees}. We partition the $n$ processors into two committees, $C_1$ and $C_2$, which differ in size by at most 1. At least one of these committees must be \emph{sufficiently honest}, meaning it has a fraction of faulty processors at most $f/n$. Rather than having a single leader send their value in round 1, we have committee $C_1$ carry out a recursive call to BA, and then send their outputs to all processors. If $C_1$ is sufficiently honest, the committee members will agree on a value, and all processors will then receive the same majority value from the committee. In round 2, we do the same using committee $C_2$. Since at least one of $C_1$ or $C_2$ is sufficiently honest, at least one of them effectively acts as a correct leader. 
 
  \vspace{0.2cm} 
 \noindent \textbf{Looking forward to Sections \ref{random} and \ref{sec:derandomization}.} The aim of Recursive Phase King is to achieve communication complexity $O(n^2)$. For some constant $c\in \mathbb{N}$, the all-to-all communication in rounds 1 and 2 requires correct processors to send at most $cn^2$ bits (at `depth 1', say) \emph{in addition to} those bits required by the two recursive calls. Ignoring rounding errors, those two recursive calls require the sending of $c(n/2)^2 +c(n/2)^2=cn^2/2$ bits (at `depth 2', say), and so on. Summing over all depths, the protocol requires correct processors to send at most $2cn^2$ bits. 
 
 In Section \ref{random}, we describe a probabilistic version of Recursive Phase King that requires correct processors to send only $O(n)$ messages at each depth. Then, in Section \ref{sec:derandomization}, we describe how to derandomise the protocol. Sending $O(n)$ messages at each of $O(\log n)$ depths yields communication complexity $O(n \log n)$.

\section{A Randomised Protocol for $\epsilon$-BA} \label{random} 

\subsection{Protocol specification} 
We first present a randomised protocol, then derandomise it in Section~\ref{sec:derandomization}. As in Recursive Phase King, we divide the $n$ processors into two committees, $C_1$ and $C_2$, differing in size by at most 1. The protocol invokes an instance of $\epsilon$-BA on each committee. In this section's analysis, it does not matter exactly how the committees carry out $\epsilon$-BA, so long as they do so correctly; for simplicity, one can imagine they use some deterministic protocol.\footnote{We require that all correct processors terminate and output a value in $\{0, 1\}$ within some known finite time bound, even when a majority of processors are Byzantine. This condition is also satisfied by the protocol we define to prove Theorem~\ref{thm:epsilon-ba-intro}.} 

\vspace{0.2cm} 
To reduce communication complexity, each processor samples values from $k$ randomly chosen processors, rather than sending values to all others. Here $k$ is a fixed constant independent of $n$. The basic idea is that, so long as $k$ is sufficiently large, most correct processors should output correctly. In Section~\ref{sec:derandomization}, we show via a counting argument that there exists a \emph{fixed} specification of the sampling choices $\mathtt{Smp}$ that causes most correct processors to output correctly. This holds regardless of the processors' inputs, the choice of which processors are faulty, and the behaviour of faulty processors.

\vspace{0.2cm} 
The probabilistic protocol, which we call Probabilistic $\epsilon$-RPK (where RPK stands for Recursive Phase King), is described in Figure \ref{fig1}. Each processor maintains a value $v \in \{0, 1\}$, initialised to its input. 

\begin{figure}
 \vspace{0.2cm} 

\begin{tcolorbox}[colback=gray!5!white,colframe=black!75!black]
 \textbf{Probabilistic $\epsilon$-RPK: Instructions for round $r\in \{ 1,2 \}$}.

\vspace{0.1cm} 

\noindent \emph{Step 1.} Each processor $p$ samples $k$ processors uniformly at random (with replacement) and learns their values. If at least $k(2/3 - \epsilon/2)$ report the same value $b$, processor $p$ sets its \emph{response value} to $b$; otherwise, $p$'s response value is $\bot$.

\vspace{0.1cm} 
\noindent \emph{Step 2.} 
Each processor $p$ independently samples $k$ processors and learns their response values. Based on responses:
\begin{itemize}
    \item If $\geq k \cdot 2/3$ responses equal some $b \in \{0,1\}$: set $v := b$ with grade 2.
    \item Else if there exists a unique $b \in \{0,1\}$ with $\geq k/3$ responses equal to $b$: set $v := b$ with grade 1.
    \item Else: keep $v$ unchanged with grade 0.
\end{itemize}

\vspace{0.1cm} 
\noindent \emph{Step 3.} 
Run $\epsilon$-BA on $C_r$. Then each processor $p$ samples $k$ members of $C_r$ to find their output:
\begin{itemize}
    \item If $p$ has grade 2: keep current value.
    \item Otherwise: set $v$ to the majority response from $C_r$, breaking ties arbitrarily.
\end{itemize}
If $r=1$, proceed to round 2. Otherwise, output $v$.

     \end{tcolorbox}

\caption{Probabilistic $\epsilon$-RPK}
 \label{fig1} 
\end{figure} 

\subsection{Analysis}

Suppose $n>2$, $\epsilon \in (0,1/3)$, and $f<n(1/3-\epsilon)$. In this section, we define what it means for a sampling choice to be \emph{bad}, then show that if the sampling choice is not bad, Probabilistic $\epsilon$-RPK achieves $\epsilon$-BA. 

\vspace{0.2cm} 
\noindent \textbf{Defining bad sampling choices.} The protocol consists of six steps (three per round). A \emph{sampling choice} is a specification of which $k$ processors each correct processor samples in each of the six steps. We introduce the following terminology: 
\begin{itemize} 
\item For steps 1, 2, 4, and 5, the \emph{response set} is the set of all processors $\Pi$ (these are the processors that may be queried).
\item For step 3, the response set is $C_1$; for step 6, the response set is $C_2$. 
\item In each step, a correct processor's \emph{response value} is the value it returns if queried.  
\end{itemize} 
\noindent For each step $s$ and $b\in \{ 0,1 \}$, let $\alpha_{s,b}$ be the fraction of the response set that are correct processors with response value $b$. For a correct processor $p$, let $\alpha_{s,b,p}$ be the fraction of the $k$ processors sampled by $p$ in step $s$ that are correct and have response value $b$. 

A sampling choice is \emph{bad for $p$} if there exist $s\in \{1,\ldots,6\}$ and $b\in \{0, 1\}$ such that $|\alpha_{s,b}-\alpha_{s,b,p}|\geq \epsilon/2$. 
A sampling choice is \emph{bad} if it is bad for at least $n\epsilon/4$ correct processors.

\vspace{0.2cm} 
We say \emph{sampling is not bad} if the sampling choice selected during a given execution is not bad.

\begin{lemma}\label{lem:good-implies-correct}
If sampling is not bad, Probabilistic $\epsilon$-RPK achieves $\epsilon$-BA.
\end{lemma}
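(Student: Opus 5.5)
We work throughout with the set $G$ of correct processors for which the sampling choice is \emph{not} bad. Since sampling is not bad, fewer than $n\epsilon/4$ correct processors lie outside $G$. For $p\in G$, every step $s$ and every $b$ satisfy $|\alpha_{s,b,p}-\alpha_{s,b}|<\epsilon/2$. Faulty processors make up at most a fraction $f/n<1/3-\epsilon$ of $\Pi$, so the fraction of $p$'s sample that is faulty or has response value $b$ is below $\alpha_{s,b}+\epsilon/2+$ (the faulty fraction of $p$'s sample). We cannot use this last bound for faulty processors directly. Instead we bound it by $1-\alpha_{s,0,p}-\alpha_{s,1,p}-(\text{correct }\bot\text{ fraction})$, which is again controlled through the $\alpha$'s. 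The plan is to show that processors in $G$ behave like processors in deterministic Phase King, with thresholds shifted by $\epsilon/2$. We then take $X$ to be the correct processors outside $G$, together with a further set of fewer than $3n\epsilon/4$ processors defined below, which gives $|X|<n\epsilon$.

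\textbf{Gradecast analogues.} For round 1, we prove three sampled versions of the Gradecast properties.

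First, \emph{Step-1 agreement}. Suppose $p\in G$ sets its response value to $b$. Then at least $k(2/3-\epsilon/2)$ of its sample report $b$. At most a fraction $(1/3-\epsilon)$-ish of these can be faulty (up to the $\epsilon/2$ sampling deviation), so $\alpha_{1,b}>1/3$. Hence $\alpha_{1,1-b}<2/3-\epsilon$, and no $q\in G$ can set response value $1-b$. Correct processors outside $G$ may still send $1-b$, but there are fewer than $n\epsilon/4$ of them.

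Second, \emph{Knowledge of agreement}. Suppose $p\in G$ gets grade 2 for $b$ in Step 2. Then $\alpha_{2,b}>2/3-\epsilon/2-(1/3-\epsilon)-\ldots$. The aim is to show $\alpha_{2,b}\ge 1/3+\epsilon/2$, so that every $q\in G$ sees at least $k/3$ responses $b$. We must also show, using Step-1 agreement, that $q$ sees fewer than $k/3$ responses $1-b$. Together these force every $q\in G$ to set $v=b$ with grade at least 1.

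Third, \emph{Validity$^+$}. If all but fewer than $n\epsilon$ correct processors hold $x$, then $\alpha_{1,x}>1-(1/3-\epsilon)-\epsilon=2/3$. Every $p\in G$ therefore sees at least $k(2/3-\epsilon/2)$ copies of $x$ and responds $x$. Then $\alpha_{2,x}\ge 2/3-\epsilon/4$ or so, and each $p\in G$ gets grade 2 with value $x$.

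\textbf{King steps.} In Step 3 (and Step 6), suppose the committee $C_r$ is sufficiently honest, meaning its faulty fraction is at most $f/n$. By the inductive hypothesis that $\epsilon$-BA is solved on $C_r$, all but fewer than $\epsilon|C_r|$ correct members output a common $w$. So $\alpha_{3,w}>1-(1/3-\epsilon)-\epsilon=2/3$, and each $p\in G$ sees a strict majority $w$. If $p$ has grade 2 instead, knowledge of agreement gives $w=b$, provided $w$ is the value the committee members held. Here lies a subtlety. $w$ is only guaranteed to be a committee-wide value, and we need $w$ to equal $b$. The repair uses $\epsilon$-Validity of the committee: every correct member in $G$ holds $b$, so fewer than $\epsilon|C_r|$ correct committee members deviate, and hence $w=b$. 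We must check that $G$-membership covers almost all committee members. A fresh counting step handles this: we put into the extra part of $X$ the non-$G$ members, so that the deviation count is measured relative to $|C_r|$.

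After a round with an honest committee, all of $G$ holds a common value $w$. The validity argument then shows that in the next round all of $G$ gets grade 2 and keeps $w$. Exactly one of $C_1, C_2$ is guaranteed honest, so we handle both orders. Final $\epsilon$-Validity follows by applying Validity$^+$ in both rounds.

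\textbf{Main obstacle.} The main obstacle is the arithmetic on thresholds. Faulty processors in a sample are not controlled by the $\alpha$'s, so the counts for Knowledge of Agreement and for the Step-1 threshold $2/3-\epsilon/2$ must be carried out carefully. The accounting of $X$ across the six steps and the committee bound must also come out to $<n\epsilon$. I would first verify these inequalities using $f/n<1/3-\epsilon$ and deviations $<\epsilon/2$, and adjust which bad processors are charged to $X$ if needed.
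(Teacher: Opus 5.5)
Your skeleton matches the paper's. The sufficiently honest committee plays the king. If a not-bad processor has grade 2 on $b$, then every not-bad processor holds $b$, so fewer than $n\epsilon/4\le\epsilon|C_r|$ correct committee members deviate, and the committee's $\epsilon$-Validity forces its output to be $b$. The other round is absorbed by the Validity argument, which tolerates fewer than $n\epsilon$ stragglers.

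But you never establish the step that carries the weight---one not-bad processor with grade 2 on $b$ forces every not-bad processor to $b$---and the arithmetic you sketch for it is wrong. First, the faulty share of a step-2 sample is \emph{not} controlled through the $\alpha$'s. Badness only constrains the shares of correct processors with response $0$ and $1$, so the rest of a not-bad step-2 sample may be faulty processors in place of correct $\bot$-responders. Second, your Step-1 agreement fails. From $p$'s threshold you only get $\alpha_{1,b}>1/3-\epsilon$, because a not-bad step-1 sample can be almost an $f/n+\epsilon$ share faulty. Worse, the conclusion itself is false. With $f/n$ close to $1/3-\epsilon$ and correct inputs split evenly, faulty-or-correct-$b$ processors make up about $(1+f/n)/2\approx 2/3-\epsilon/2$ of $\Pi$, which is exactly the step-1 threshold. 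So a tilt of less than $\epsilon/2$ in a sample lets faulty reports push one not-bad processor to $0$ and another to $1$. (A smaller slip: Validity$^+$ needs $\alpha_{2,x}\ge 2/3+\epsilon/2$. You do have $\alpha_{2,x}>2/3+3\epsilon/4$, since more than $n(2/3+3\epsilon/4)$ not-bad correct processors respond $x$; your $2/3-\epsilon/4$ would not give grade 2.)

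The paper disposes of this step in one line (``It follows that $\alpha_{2,b}>2/3-\epsilon/2$, so $b$ is unique\ldots''), which tacitly counts only correct responses toward the $2k/3$ threshold. So the obstacle you flagged is genuine, and with the definitions as stated no bookkeeping closes it: the lemma itself fails. Here is a counterexample:
\begin{enumerate}
\item Take $f/n$ close to $1/3-\epsilon$ and correct inputs split evenly.
\item In step 1, use tilted samples as above to make $n/3$ correct processors respond $0$, $n/3$ respond $1$, and the rest $\bot$.
\item In step 2, give every correct processor a sample that is one third correct-$0$, one third correct-$1$ and one third faulty. Such a sample is not bad, since its $0$ and $1$ shares match the population exactly.
\item Let the faulty processors answer each processor with that processor's own input. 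Every correct processor then gets grade 2 on its input and ignores the committee.
\item The same happens in round 2, so the outputs split evenly, although no correct processor is bad for the sampling.
\end{enumerate}

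To make your Gradecast-analogue route work, strengthen ``not bad'' so that every sample's shares of faulty, correct-$0$ and correct-$1$ processors lie within $\delta$ of their population values, with $\delta$ well below $\epsilon/2$. Also raise the step-1 threshold above $(1+f/n)/2+2\delta$, so that Step-1 agreement becomes true. For example, $\delta=\epsilon/8$ with threshold $k(2/3-\epsilon/4)$ makes Step-1 agreement, Knowledge of Agreement, Validity$^+$ and the king step all go through. The derandomisation count is unchanged apart from a larger constant $k$.
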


\begin{proof} Suppose sampling is not bad. Termination is immediate, so it remains to prove $\epsilon$-Validity and $\epsilon$-Agreement. 

\vspace{0.1cm} 
\noindent \emph{$\epsilon$-Validity.} Suppose fewer than $n\epsilon$ correct processors have input other than $b$. Since $f<n(1/3-\epsilon)$, at least $2/3$ of all processors are correct with input $b$. If $p$ is correct and the sampling choice is not bad for $p$, then $p$ sets its response value to $b$ in step 1. Since sampling is not bad, at least $n(2/3+\epsilon/2)$ processors are correct and set their response value to $b$ in step 1. All correct processors for whom the sampling choice is not bad then set their value to $b$ with grade 2 in step 2, and do not change their value in step 3. The same argument applies to steps 4--6. 

\vspace{0.1cm} 
\noindent \emph{$\epsilon$-Agreement.} Suppose the first committee with Byzantine fraction at most $f/n$ is $C_1$ (a similar argument applies if it is $C_2$). If at least $n\epsilon/4$ correct processors set their value to some $b \in \{0,1\}$ with grade 2 in step 2, then this includes some processor for whom the sampling choice is not bad. It follows that $\alpha_{2,b}>2/3 -\epsilon/2$, so $b$ is unique and all correct processors for whom the sampling choice is not bad set their value to $b$ with grade at least 1 in step 2. The number of processors in $C_1$ that are correct and do not use $b$ as their input to the $\epsilon$-BA instance on $C_1$ is therefore less than $n\epsilon/4$. Since $|C_1|\geq \lfloor n/2 \rfloor$ and $n>2$, this is less than $|C_1| \cdot \epsilon$. By $\epsilon$-Validity, correct processors in $C_1$ output $b$ from the $\epsilon$-BA instance on $C_1$. 

Whether or not at least $n\epsilon/4$ correct processors set their value to some $b$ with grade 2 in step 2, all correct processors for whom the sampling choice is not bad set their value to the committee's output $v$ in step 3 (or already have value $v$ and keep it). Applying the same argument as for $\epsilon$-Validity to steps 4--6, all correct processors for whom the sampling choice is not bad output $v$. 
\end{proof}

\section{Derandomisation}\label{sec:derandomization}

We now show that the randomised protocol can be derandomised. Specifically, for sufficiently large $k$, there exists a \emph{fixed} sampling choice $\mathtt{Smp}_n$ for each $n$ that is not bad, regardless of the processors' inputs, the identity of the faulty processors, or their behaviour. 

\vspace{0.2cm}
\noindent \textbf{Breaking sampling choices into steps.} 
A \emph{step $s$ sampling choice} specifies which $k$ processors each processor samples in step $s$. A  \emph{step $s$ configuration} specifies, for each processor in the corresponding response set, whether it is Byzantine or correct, and, in the latter case:
\begin{itemize}
    \item For $s \in \{1, 3, 4, 6\}$, whether it has response value 0 or 1.
    \item For $s \in \{2, 5\}$, whether it has response value 0, 1, or $\bot$.
\end{itemize}

\noindent For each $b \in \{0,1\}$ and step $s$ configuration $z$, let $\alpha_{s,b}(z)$ be the fraction of the response set that are correct processors with response value $b$ in $z$. For a correct processor $p$ and step $s$ sampling choice $\mathtt{Smp}_n^s$, let $\alpha_{s,b,p}(z, \mathtt{Smp}_n^s)$ be the fraction of the $k$ processors sampled by $p$ according to $\mathtt{Smp}_n^s$ that are correct with response value $b$ in $z$. 

We say $\mathtt{Smp}_n^s$ is \emph{bad for $p$ given $z$} if $|\alpha_{s,b}(z) - \alpha_{s,b,p}(z, \mathtt{Smp}_n^s)| \geq \epsilon/2$ for some $b \in \{0,1\}$. We say $\mathtt{Smp}_n^s$ is \emph{bad given $z$} if it is bad for at least $n\epsilon/24$ correct processors given $z$.\footnote{The factor of 24 arises from six steps and the threshold of $n\epsilon/4$ in the definition of a bad sampling choice.} We say $\mathtt{Smp}_n^s$ is \emph{bad} if it is bad given some step $s$ configuration.

\vspace{0.2cm}
\noindent \textbf{Step 1.}
The number of step 1 configurations is at most $3^n$ (each processor is Byzantine, or correct with response value 0, or correct with response value 1). We show that for sufficiently large $k$, some step 1 sampling choice is not bad.

Fix a configuration $z$ and consider the uniform distribution on step 1 sampling choices. Under this distribution, each processor independently samples $k$ processors uniformly at random with replacement. Let $X_i$ be the indicator that the sampling choice is bad for processor $p_i$. By the Chernoff bound, the probability that $p_i$'s sample deviates from the population fraction by at least $\epsilon/2$ can be made arbitrarily small by choosing $k$ large. Specifically, for any $\alpha > 0$, there exists $k = k(\alpha, \epsilon)$ such that $\Pr[X_i = 1] \leq \alpha$ for all $i$.

Since each processor samples independently, $X_1, \ldots, X_n$ are independent random variables. Let $S_n = \sum_{i=1}^n X_i$. We apply the Chernoff bound with KL divergence (see, e.g., \cite{mitzenmacher2005probability}): for $q > p$,
\[
\Pr[S_n \geq nq] \leq e^{-n D(q \| p)},
\]
where $D(q \| p) = q \ln(q/p) + (1-q) \ln((1-q)/(1-p))$ is the Kullback--Leibler divergence.

Setting $p = \alpha$ and $q = \epsilon/24$ (with $\alpha < \epsilon/24$), we require $D(\epsilon/24 \| \alpha) > \ln 3$ to ensure $\Pr[S_n \geq n\epsilon/24] < 3^{-n}$. As $\alpha \to 0$:
\[
D(\epsilon/24 \| \alpha) = \frac{\epsilon}{24} \ln\frac{\epsilon/24}{\alpha} + \left(1 - \frac{\epsilon}{24}\right) \ln\frac{1 - \epsilon/24}{1 - \alpha} \to \infty,
\]
since the first term diverges while the second approaches $(1 - \epsilon/24) \ln(1 - \epsilon/24)$. Thus, for sufficiently small $\alpha$ (achieved by sufficiently large $k$), we have $D(\epsilon/24 \| \alpha) > \ln 3$.
By a union bound over all at most $3^n$ configurations, the probability that a uniformly random sampling choice is bad is then less than $3^n \cdot 3^{-n} = 1$. Hence some step 1 sampling choice is not bad.

\vspace{0.2cm}
\noindent \textbf{Steps 2--6.}
The analysis for steps 2 and 5 is identical, except that configurations may also assign response value $\bot$, giving at most $4^n$ configurations. The required bound $D(\epsilon/24 \| \alpha) > \ln 4$ is achieved by the same reasoning. For steps 3 and 6, the response set is $C_r$ rather than $\Pi$, but the number of configurations remains less than $3^n$, so the same analysis applies.

Combining the six steps, for sufficiently large $k$, there exists a sampling choice $\mathtt{Smp}_n$ that is not bad in any execution.

\vspace{0.2cm}
\noindent \textbf{Specifying the protocol.}
For each $n>2$, fix a sampling choice $\mathtt{Smp}_n$ that is not bad in any execution. Fix also a protocol $P$ that solves $\epsilon$-BA for $n\leq 2$ with communication complexity at most some constant $C$ (such a protocol exists for any $\epsilon \in (0, 1/3)$ and $f < n(1/3 - \epsilon)$).

The instruction for $p$ to ``send values as specified by $\mathtt{Smp}_n$'' means $p$ sends its value to every processor that samples $p$ in that step according to $\mathtt{Smp}_n$. Similarly for response values. In each step, processors ignore messages from processors not required to send them messages by the protocol. The protocol is described in Figure \ref{fig2}.

\begin{figure} 
\vspace{0.2cm}

\begin{tcolorbox}[colback=gray!5!white,colframe=black!75!black]
\textbf{$\epsilon$-RPK: Instructions for round $r \in \{1, 2\}$}.

\vspace{0.1cm}

\noindent \emph{Step 1.} Send values as specified by $\mathtt{Smp}_n$. If at least $k(2/3 - \epsilon/2)$ received values equal some $b$, set response value to $b$; otherwise, set response value to $\bot$.

\vspace{0.1cm}
\noindent \emph{Step 2.} Send response values as specified by $\mathtt{Smp}_n$. Based on received response values:
\begin{itemize}
    \item If $\geq k \cdot 2/3$ equal some $b \in \{0,1\}$: set $v := b$ with grade 2.
    \item Else if a unique $b \in \{0,1\}$ has $\geq k/3$: set $v := b$ with grade 1.
    \item Else: keep $v$ unchanged with grade 0.
\end{itemize}

\vspace{0.1cm}
\noindent \emph{Step 3.} Run $P$ on $C_r$ if $|C_r| \leq 2$, or $\epsilon$-RPK on $C_r$ otherwise, with each processor using its current value as input. Processors in $C_r$ send their output to all processors that sample them according to $\mathtt{Smp}_n$. Each processor $p$ then updates:
\begin{itemize}
    \item If $p$ has grade 2: keep current value.
    \item Otherwise: set $v$ to the majority reported output from $C_r$ (ties broken arbitrarily).
\end{itemize}
If $r = 1$, proceed to round 2. Otherwise, output $v$.
\end{tcolorbox}
\caption{$\epsilon$-RPK}
\label{fig2} 
\end{figure} 

\vspace{0.2cm}
\noindent \textbf{Complexity analysis.}
The protocol has $O(\log n)$ levels of recursion, as committees halve in size at each level. At each level, the total number of processors across all committees is $n$, and the total number of bits sent by correct processors is $O(kn)$, where $k = k(\epsilon)$ depends only on $\epsilon$. There are $O(n)$ invocations of $P$ at the base of the recursion, each requiring at most $C$ bits. Total communication is therefore $O(n \log n)$.

\vspace{0.2cm}
\noindent \textbf{Completing the proofs of Theorems~\ref{thm:epsilon-ba-intro} and~\ref{thm:extractable-ba-intro}.}
That $\epsilon$-RPK solves $\epsilon$-BA follows from Lemma~\ref{lem:good-implies-correct} and our choice of $\mathtt{Smp}_n$. The complexity analysis completes the proof of Theorem~\ref{thm:epsilon-ba-intro}. Theorem~\ref{thm:extractable-ba-intro} then follows from Proposition~\ref{prop:reduction}.

\section{Discussion} \label{discuss} 

We have shown that reaching univalency for BA can be achieved with communication complexity $O(f \log f)$ when $f < n(1/3 - \epsilon)$. This precisely delineates what the Dolev-Reischuk bound measures: the $\Omega(f^2)$ lower bound reflects the cost of ensuring every correct processor can output---that is, dissemination---while reaching univalency requires only $O(f \log f)$ communication.

This decomposition has conceptual value: it reveals that ``reaching agreement'' and ``learning the agreement'' are fundamentally different computational tasks with different costs. The former is nearly linear in $f$; the latter is quadratic.

\vspace{0.2cm}
\noindent \textbf{Adaptive adversaries.} Although we explicitly considered a static adversary for the sake of simplicity, the result carries over immediately to adaptive adversaries: the protocol described in Section~\ref{sec:derandomization} is fully deterministic, so the adversary gains no advantage from observing protocol execution before choosing which processors to corrupt.

\vspace{0.2cm}
\noindent \textbf{Lower bounds.} For completeness, in Appendix~\ref{fbound} we give a simple proof that reaching univalency has message complexity $\Omega(f)$. Combined with our upper bound, this leaves a gap between $\Omega(f)$ and $O(f \log f)$.

\vspace{0.2cm}
\noindent \textbf{Open questions.} Our results leave several natural questions open:
\begin{itemize}
    \item For $f<n/3$ in the unauthenticated setting, is the communication complexity of reaching univalency for BA $\Omega(f \log f)$? More generally, what is the tight complexity as a function of both $n$ and $f$?
    \item What is the communication complexity of reaching univalency in the authenticated setting when $f \in (n/3, n/2)$? 
        \item Can similar separations between univalency and dissemination be established in asynchronous or partially synchronous models?
\end{itemize}

\section{Acknowledgements} 
We would like to thank Ittai Abraham and Ling Ren for a number of helpful conversations on the topic of this paper.

\bibliographystyle{ACM-Reference-Format}

\appendix

\section{The Dissemination Lower Bound}\label{AppA}

\begin{theorem} \label{DRmempool} 
Consider the authenticated setting with omission faults. Suppose $f<n-1$. Any deterministic protocol satisfying the condition that every correct processor receives at least one message   in every execution has message complexity   $\Omega(n+f^2)$. 
\end{theorem}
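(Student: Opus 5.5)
We prove the two terms separately. The $\Omega(n)$ term is immediate. In the execution where no processor is faulty, every one of the $n$ processors receives at least one message. Each message has a single recipient, so at least $n$ messages are sent, and all of them are sent by correct processors. For the $\Omega(f^2)$ term we follow the Dolev--Reischuk argument. Suppose, for contradiction, that the message complexity is less than $(f/2)^2$; we may assume $f\geq 2$. The requirement that a processor merely receive a message replaces their use of Agreement and Validity.

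\textbf{Step 1.} Fix a set $B$ of $\lceil f/2\rceil$ processors. Consider the execution $E_0$ in which the processors in $B$ are omission-faulty and \emph{receive nothing}, i.e., each discards every incoming message, while all other processors are correct. Because omission-faulty processors otherwise follow the protocol, $E_0$ is a legal execution, and correct processors send fewer than $(f/2)^2$ messages in it. Fewer than $(f/2)^2$ of these messages are addressed to $B$, so by averaging some $q\in B$ is sent at most $f/2$ messages by correct processors in $E_0$. (The count also includes messages sent by $B$ itself, but only messages from correct processors count toward complexity. It is enough that correct senders to $q$ are few; faulty senders in $B$ are handled in Step 2.)

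\textbf{Step 2.} Build $E_1$ from $E_0$ as follows. Make $q$ correct. Let $A$ be the set of correct senders in $E_0$ that sent messages to $q$, so $|A|\leq f/2$. Make the processors in $A$ omission-faulty, omitting exactly their messages to $q$. Keep the processors in $B\setminus\{q\}$ omission-faulty, and have them also omit every message they would send to $q$. The number of faulty processors is then $|A|+|B|-1\leq f$. We next argue that $E_1$ and $E_0$ are indistinguishable to everyone except $q$. In $E_0$, the processor $q$ received nothing. In $E_1$, $q$ also receives nothing: the senders in $A$ omit their messages to $q$, and the faulty processors in $B\setminus\{q\}$ omit theirs. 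Therefore $q$'s state, and hence its outgoing messages, are exactly as in $E_0$. Here we use that an omission-faulty processor in $E_0$ behaves like a correct processor that received nothing, so the same holds for $q$ in $E_1$. It follows by induction on time-slots that all other processors see exactly the same messages in $E_1$ as in $E_0$. So $E_1$ is a legal execution with at most $f$ faults in which the correct processor $q$ receives no message, which contradicts the hypothesis. The hypothesis $f<n-1$ guarantees that enough processors exist for this construction. We also need $q$ to be correct in $E_1$, which holds because $q\notin A$: the processor $q$ is faulty in $E_0$, so it is not a correct sender there.

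\textbf{Main obstacle.} The key point is the bookkeeping in Step 2. The processors in $B$ are faulty in $E_0$, and in $E_1$ they must stay faulty while also silencing their messages to $q$. We must check that doing so does not alter any message received by a correct processor other than $q$. This holds because omissions affect only messages to $q$, and $q$ itself behaves identically in the two executions. Authentication plays no role, since no processor forges anything. Combining the two terms, the message complexity is at least $\max(n,f^2/4)=\Omega(n+f^2)$.
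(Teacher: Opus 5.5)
Your proof is correct and is essentially the paper's argument: isolate $\lceil f/2\rceil$ omission-faulty processors that ignore all incoming messages, pick one that correct processors send at most $f/2$ messages to, then make it correct while silencing those (at most $\lfloor f/2\rfloor$) senders and the rest of the isolated set toward it. It then receives nothing in an execution that is indistinguishable for everyone else. The only cosmetic differences are that you phrase the counting as an averaging contradiction rather than the paper's case split, and you let the isolated processors send to one another and discard what they receive, whereas the paper has them not communicate at all.
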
 
\begin{proof} The proof is just a slightly simplified version of the proof of Dolev and Reischuk. 

That the message complexity is $\Omega(n)$ is immediate, so it remains to establish that the message complexity is  $\Omega(f^2)$. 
This is  trivially true if $f=0$, so we can suppose $f>0$.

Consider an arbitrary protocol.  First, we take an arbitrary subset $P_1\subseteq \Pi$ of size $\lceil f/2 \rceil$, setting $P_2=\Pi \setminus P_1$. Then: 
\begin{itemize} 
\item We consider the execution $E_1$,  in which the processors in $P_1$ are faulty. Processors in $P_1$ act correctly except that they do not send or receive messages to or from each other, and ignore messages from processors in $P_2$. 
\item If each processor in $P_1$ receives at least $\lceil f/2 \rceil$ messages from processors in $P_2$ in $E_1$, then we are done, since the correct processors in $P_2$ must then send at least $|P_1|\cdot f/2 \geq  (f/2)^2$ messages (combined). Otherwise, there exists $p\in P_1$ that receives at most $\lfloor f/2 \rfloor$ messages from processors in $P_2$. 
\item Now we use the fact that, because $P_1$ is of size  $\lceil f/2 \rceil$, we have $\lfloor f/2 \rfloor$ processors left to be faulty.  We consider an execution $E_2$ in which:
\begin{itemize} 
\item[-] Processor $p$ is correct. 
\item[-]  The remaining processors in $P_1$ act as in $E_1$, and all processors in $P_2$ act correctly except that they do not send messages to $p$ (this requires at most  $\lfloor f/2 \rfloor$ of them to be faulty). 
\end{itemize} 
\end{itemize} 
\emph{Analysis}. Since $p$ ignored messages from $P_2$ in $E_1$, does not receive messages from  $P_2$ in $E_2$, and does not receive messages from $P_1$ in either execution, it sends the same messages to processors in $P_2$ in both executions. In fact, $E_1$ and $E_2$ are indistinguishable for all processors in $P_2$ that are correct in $E_2$, and since $f<n-1$ this set is non-empty. So, $E_2$ is an execution in which  at most $f$ processors are faulty, and in which $p$ is a correct processor that does not receive any messages.  \end{proof}

\section{Reaching Univalency Requires $\Omega(f)$ Messages} \label{fbound}

In this section we consider crash faults and show that any protocol achieving univalency for BA has message complexity $\Omega(f)$.

\vspace{0.2cm}
\noindent \textbf{The setup.} We consider the state-transition-diagram model of Section~\ref{setup}, and suppose we have two protocols $P_1$ (for reaching univalency) and $P_2$ (for dissemination). We suppose that for some $t^*$, which may depend on $n$ and $f$, executing $P_1$ causes each correct processor to enter an ``intermediate output state'' at time $t^*$.\footnote{The time $t^*$ may also depend on processor inputs, but this is not important since there are only finitely many possible inputs for any given $n$.} Protocol $P_2$ is then executed, with each processor's input state for $P_2$ being its intermediate output state after executing $P_1$. We require that $P_2$ causes all correct processors to produce a BA output. We say that $P_1$ \emph{achieves univalency} if, for any execution of $P_1$, there exists a fixed $v \in \{0, 1\}$ such that any subsequent execution of $P_2$ causes all correct processors to output $v$.

\vspace{0.2cm}
\noindent \textbf{The proof.} Towards a contradiction, suppose there exist $P_1$, $P_2$, $n > f$, and $f > 0$, such that in every execution of $P_1$, fewer than $f/2$ correct processors send or receive a message.

Consider the two executions of $P_1$ in which all processors are correct and either all processors have input 0 or all processors have input 1. Let $A$ be the set of processors that send or receive a message in at least one of these two executions. By assumption, $|A| < f$.

Now consider $\mathcal{E}$, the set of all executions of $P_1$ in which processors in $A$ crash at time-slot 0, while processors outside $A$ are correct (with arbitrary inputs). Since processors outside $A$ neither send nor receive messages in either of the two executions above, and since crashed processors send no messages, processors outside $A$ send and receive no messages in any execution in $\mathcal{E}$.

\vspace{0.2cm} 
To complete the proof, we adapt the standard bivalency argument~\cite{fischer1985impossibility} to intermediate output states. Reorder processors so that $p_1, \ldots, p_k$ are the processors outside $A$, where $k = n - |A| > n - f$. For each $i \in \{0, \ldots, k\}$, let $E_i$ be the execution of $P_1$ in $\mathcal{E}$ in which each $p_j$ for $j \leq i$ has input 1, while each $p_j$ for $j > i$ has input 0. By validity, $E_0$ must be 0-valent and $E_k$ must be 1-valent. Hence there exists $i \in \{0, \ldots, k-1\}$ such that $E_i$ is 0-valent and $E_{i+1}$ is 1-valent.

Since $|A| < f$, there exist executions of $P_2$ extending $E_i$ and $E_{i+1}$ respectively, in which $p_{i+1}$ crashes immediately after $P_1$ completes, while all processors outside $A \cup \{p_{i+1}\}$ remain correct (this respects the fault bound since $|A \cup \{p_{i+1}\}| \leq f$). In these two complete executions (of $P_1$ followed by $P_2$), all correct processors have identical views: they receive no messages during $P_1$, and during $P_2$ they cannot distinguish the two executions since only $p_{i+1}$'s input differs, and $p_{i+1}$ has crashed. Yet one execution must output 0 and the other must output 1, giving the required contradiction.

\end{document}